\newcommand{\R}{\ensuremath{\mathbb{R}}}
\newcommand{\C}{\ensuremath{\mathbb{C}}}
\newcommand{\X}{\ensuremath{\mathfrak{X}}}
\newcommand{\p}{\partial}
\newtheorem {theorem} {Theorem} 
\newtheorem {proposition} [theorem] {Proposition}
\newtheorem {corollary} [theorem] {Corollary}
\newcommand{\cF}{\mathcal{F}}
\title{Integrability of the Hide--Skeldon--Acheson dynamo}
\author{Adam Mahdi$^{1,2}$ and Claudia Valls$^{3}$}
\address{$^1$ Mathematics Department,
North Carolina State University, Raleigh, NC 27695-8205,
USA} \email{amandi@ncsu.edu}
\address{$^2$ Faculty of Applied Mathematics, AGH University of Science and Technology, al. Mickiewicza 30, 30-059 Krak\'ow, Poland}
\address{$^3$ Departamento de Matem\'atica, Instituto Superior T\'ecnico,
Av. Rovisco Pais 1049-001, Lisboa, Portugal}
\email{cvalls@math.ist.utl.pt}
\subjclass{34C05 34A34}
\keywords{Dynamo equation; Integrability;  Invariant algebraic surfaces}
\begin{document}

\maketitle

\begin{abstract}
In this work we consider the Hide-Skeldon-Acheson dynamo
model
\[
\dot x=x(y-1)-\beta z, \quad \dot y =\alpha(1-x^2)-\kappa y, \quad \dot z =x-\lambda
z,
\]
where $\alpha,\beta,\kappa$ and $\lambda$ are parameters.
 We contribute to the understanding of its global
dynamics, or more precisely, to the topological structure of its
orbits by studying the integrability problem. Provided $\alpha \ne
0$ we identify the values of the parameters of this model, for which
it admits a first integral. Also, as corollary of our main results we get that  for $\alpha, \beta, \kappa \ne 0$ the dynamo model does not admit
a polynomial, rational or Darboux first integral.
\end{abstract}

\section{Introduction}
In 1996 Hide, Skeldon and Acheson \cite{HidSkeAch96} proposed a
model for self--exciting dynamo action in which a Faraday disk and
coil are arranged in series with either a capacitor or a motor. The
governing equations for these dynamo models are
\begin{equation}\label{Nose}
\dot x=x(y-1)-\beta z, \quad \dot y =\alpha(1-x^2)-\kappa y,
\quad \dot z =x-\lambda z,
\end{equation}
where $\alpha,\beta,\kappa,\lambda$ are real parameters. In what follows
those models will be called
{\it HSA dynamo}.
In \eqref{Nose} $x=x(t)$ is the current flowing
through the dynamo, $y=y(t)$ is the
angular rotation rate of the disk and $z=z(t)$ is the angular
rotation rate of the motor (or the charge in the capacitor).  The
system contains four parameters $\alpha, \beta, \kappa$ and
$\lambda$. The first one $\alpha$ is proportional to the steady
applied mechanical couple driving the disk into rotation, and
$\beta^{-1}$ measures the moment of inertia of the armature of the
motor, and $\kappa$ and $\lambda$ are the coefficients of friction
in the disk and the motor, respectively. For the derivations of
those equations see \cite{HidSkeAch96}. Since the HSA dynamo was derived for the first time, a number of its
features were revealed. For example in \cite{Hid97a, Hid97b} the
authors have extended the HSA dynamo including the effects of a
nonlinear motor, an external battery and magnetic field, the
coupling of two or more identical dynamos together. In subsequent
works \cite{MorHidSow98, GolMorHid00, Mor01} and \cite{Mor02} the
authors analysed the influence on the HSA model of nonlinearity of
the series motor, the presence of a series battery and/or an ambient
magnetic field. Bifurcation transition diagrams have been presented
in \cite{Mor07}, where the author also identifies unstable periodic
orbits pertaining to some cases.  The analysis towards identifying
the underlying chaotic attractor was done in \cite{Mor08}, where the
author used some of the unstable periodic orbits to identify a
possible template for the chaotic attractor, using ideas from
topology.

The question whether a differential model admits a first integral (for the precise definition see Section \ref{Sec:Prelim}) is of fundamental importance.
One reason is that the first integrals give conservation laws for the model and that enables one to lower its dimension. Moreover, knowing a sufficient number of first integrals
allows to solve the system explicitly. Finally the existence or non-existence \cite{LliMahVal, MahVal:Magnetic, MahVal:Suslov} of first integrals for a given model measures, in a sense, the complexity of the set of its orbits. There are two equally difficult problems. One is to prove that a given system is chaotic; or to prove that it is not. In this paper we study the integrability problem for the HSA dynamo. We contribute to the understanding of the complexity, or more precisely to the topological structure of the orbits of HSA dynamo by studying the integrability problem for this model depending on its
four parameters $\alpha, \beta, \gamma$ and $\kappa$.  For proving our main
results we shall use the information about invariant algebraic 
surfaces of this system, which is the basis of the so called Darboux
theory of integrability, for more details see Section
\ref{Sec:Prelim}. We also note that for $\kappa=\lambda=0$ and $\beta=1$,
the HSA dynamo is equivalent to the Nos{\'e}--Hoover equation (cf.
\cite{SwiWag08} and \cite{Chi-Review}), for which the first
integrals have been studied in \cite{MahVal10:Nose}.

In the following theorem we give explicit formulas for the first
integrals of the HSA dynamo in the case that $\alpha\ne 0$.
\begin{theorem}\label{thm:integrals}
Assume that $\alpha\ne0$ and define the following functions:
\[
\begin{aligned}
\cF_1&= \alpha \ln x -\alpha x^2/2 -y^2/2 +y;\\
\cF_2&= z v(x)-\int v(x) w(x)\,dx;\\
\cF_{3}&=\kappa\Big[\log\frac{2(\kappa+y-1+\mathcal{T})}{-x}\Big]-\mathcal{T};\\
\cF_{4}&=\frac{y+\kappa-1 -x\sqrt{\kappa}
\sqrt{\kappa-1}}{y+\kappa-1+ x\sqrt{\kappa}\sqrt{\kappa-1}}
\exp\big\{2z \sqrt{\kappa} \sqrt{\kappa-1}\big\};
\end{aligned}
\]
where
\[
w(x)= \Big[1-2 (\cF_3 +\alpha \frac{x^2}2-\alpha \ln x
)\Big]^{-1/2},\quad v(x)=\exp\Big[\lambda\int xw(x)\,dx\Big],
\]
\[
\mathcal{T}=\sqrt{-\kappa^2(x^2\!-\!1)\!+\!(y\!-\!1)^2\!+\!\kappa(x^2\!+\!2y\!-\!2)}.
\]
Then HSA dynamo admits the following first integral:
\begin{enumerate}
\item[(a)] if $\beta=\kappa=0$ and $\lambda\in\R$, then $\cF_1$ and
$\cF_2$ are the first integrals;
\item[(b)] if $\beta=0$, $\alpha=-\kappa(\kappa-1)$ and $\kappa\ne
0$, then $\cF_{3}$ is a first integral. Additionally, if $\lambda=0$ then
$\cF_{4}$ is also a first integral.
\end{enumerate}
\end{theorem}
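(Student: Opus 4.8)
The plan is to verify each assertion by showing that the orbital derivative $\dot{\cF}=\partial_x\cF\,\dot x+\partial_y\cF\,\dot y+\partial_z\cF\,\dot z$ vanishes identically along \eqref{Nose} once the stated parameter constraints are imposed; the substantive part is the construction that produces these functions, which I sketch together with each check.

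For part (a), $\beta=\kappa=0$, so the $x$- and $y$-equations decouple from $z$. I would first check $\cF_1$ directly: one has $\partial_x\cF_1=\alpha(1-x^2)/x$ and $\partial_y\cF_1=-(y-1)$, whence $\partial_x\cF_1\cdot x(y-1)$ and $\partial_y\cF_1\cdot\alpha(1-x^2)$ are opposite and cancel (this is also where $\kappa=0$ is needed, since a nonzero $\kappa$ would leave a residual $\kappa y(y-1)$). To produce $\cF_2$ I would restrict to a level set $\cF_1=c$; the key identity is $1-2\big(c+\alpha x^2/2-\alpha\ln x\big)=(y-1)^2$ there, so that $w=1/(y-1)$. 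Then, eliminating time via $dz/dx=\dot z/\dot x=(x-\lambda z)/\big(x(y-1)\big)$, the $z$-equation becomes a linear first-order ODE in $z(x)$ whose integrating factor is $v(x)$ and whose quadrature is $\cF_2$. The verification then splits cleanly: the $z$-free terms cancel precisely when $(y-1)w=1$, and the terms linear in $z$ cancel precisely by the integrating-factor equation defining $v$.

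For part (b), $\beta=0$ and $\alpha=-\kappa(\kappa-1)$, so again the $(x,y)$ subsystem is closed and $\cF_3$ (which is independent of $z$) should be its planar first integral. The first move is to complete the square inside $\mathcal{T}$, giving $\mathcal{T}^2=\alpha x^2+(y+\kappa-1)^2$. Writing $P=y+\kappa-1$ and using $\partial_x\mathcal{T}=\alpha x/\mathcal{T}$, $\partial_y\mathcal{T}=P/\mathcal{T}$, one gets the convenient collapse $\partial_y\cF_3=-(y-1)/\mathcal{T}$. Feeding this into $\dot{\cF_3}$ and extracting the factor $(y-1)$ leaves a bracket which, after clearing the denominator $\mathcal{T}(P+\mathcal{T})$ and using the factorization $\alpha x^2=(\mathcal{T}-P)(\mathcal{T}+P)$, reduces to $\kappa(1-\kappa)-\alpha$; this is exactly zero under the hypothesis $\alpha=-\kappa(\kappa-1)$. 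For $\cF_4$ (with $\lambda=0$, so $\dot z=x$) I would set $\mu=\sqrt{\kappa}\sqrt{\kappa-1}$, note $\mu^2=-\alpha$, and differentiate $\log\cF_4=\log(P-\mu x)-\log(P+\mu x)+2\mu z$. The two logarithmic terms combine over $P^2-\mu^2x^2$ into $2\mu(\dot y\,x-\dot x\,P)/(P^2-\mu^2x^2)$, and adding $2\mu\dot z=2\mu x$ (this is where $\lambda=0$ is used) yields the numerator $x\big(\kappa P-\kappa y-\mu^2\big)$, which vanishes because $\kappa P-\kappa y=\kappa(\kappa-1)=\mu^2$.

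All four verifications are routine once the correct groupings are in hand; the real obstacle is the construction rather than the checking. Obtaining $\cF_2$ hinges on the level-set reduction together with the observation that the decoupled $z$-equation becomes linear after reparametrizing by $x$, while guessing the closed form of the planar first integral $\cF_3$---equivalently, the right expression for $\mathcal{T}$---is the step that resists a purely mechanical approach and is naturally motivated by the Darboux theory and the invariant algebraic surfaces discussed in Section \ref{Sec:Prelim}.
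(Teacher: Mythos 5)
Your proposal is correct and coincides with the paper's approach: the authors explicitly omit the proof as a ``straightforward computation,'' and your verification supplies exactly that computation, with the key identities you isolate ($\mathcal{T}^2=\alpha x^2+(y+\kappa-1)^2$, $\partial_y\cF_3=-(y-1)/\mathcal{T}$, the cancellation $\kappa P-\kappa y=\mu^2$, and the level-set identity $w=1/(y-1)$) all checking out. Be aware, though, that your (correct) reading silently repairs two misprints in the statement of the theorem: $w$ must be built from $\cF_1$ rather than $\cF_3$, and the integrating factor you describe is $v=\exp\big[\lambda\int w(x)/x\,dx\big]$ rather than the printed $\exp\big[\lambda\int x\,w(x)\,dx\big]$, since with the printed formula one gets $\dot\cF_2=\lambda v z(x^2-1)\neq 0$ for $\lambda\neq 0$.
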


It is a straightforward computation to check that
$\mathcal{F}_1, \mathcal{F}_2, \mathcal{F}_3$ and $\mathcal{F}_4$ are the first integrals of
system \eqref{Nose} for the corresponding values of the parameters,
thus the proof of Theorem \ref{thm:integrals} will be omitted. One
should not assume that it is trivial or easy to find those functions even
though, once we know them, it is not difficult a check that they are first integrals. To date there are no
general methods that would allow to decide whether a given system of
differential equations is integrable or would give a way to
calculate its first integrals.  In the particular case of polynomial
differential systems one of the best tools to approach this problem  is the so-called Darboux theory of integrability. See Section 2 where we briefly explain how to use invariant
algebraic surfaces (in general hypersurfaces) and exponential
factors to construct a first integral.

Now we present other main results of the paper. The only values of the parameters of the HSA dynamo, for which it admits one or two Darboux first integrals (which include polynomial one) are given in Theorem 1. The following two theorems address the integrability problem for the HSA dynamo for the values of $\alpha, \beta, \kappa$ and $\lambda$ not considered in Theorem 1. Thus they can be viewed as a non-integrability results. We consider two cases $\beta=0$ (Theorem 1) and $\beta \neq 0$ (Theorem 2).

\begin{theorem}\label{thm.main.4}
The following statements hold for the HSA dynamo equation with
$\alpha \ne 0$, $\beta=0$, $\kappa \ne 0$,
$\alpha \ne -\kappa(\kappa-1)$ and $\lambda \in \R$:
\begin{enumerate}
\item[(a)] It does not admit any polynomial first integral.
\item[(b)] Its unique Darboux polynomial with nonzero cofactor is $x$.
\item[(c)] Its only exponential factor is $\exp(z)$
with the cofactor $x-\lambda z$.
\item[(d)] It is not Darboux integrable.
\end{enumerate}
\end{theorem}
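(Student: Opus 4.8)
The plan is to reduce everything to the classification of Darboux polynomials and exponential factors, since parts (a) and (d) are formal consequences of (b) and (c). With $\beta=0$ the field becomes $\mathcal{X}=x(y-1)\partial_x+(\alpha(1-x^2)-\kappa y)\partial_y+(x-\lambda z)\partial_z$, which has degree $2$; hence every cofactor is affine, $K=k_0+k_1x+k_2y+k_3z$. I would split $\mathcal{X}=\mathcal{X}_2+\mathcal{X}_1+\mathcal{X}_0$ with $\mathcal{X}_2=xy\partial_x-\alpha x^2\partial_y$, $\mathcal{X}_1=-x\partial_x-\kappa y\partial_y+(x-\lambda z)\partial_z$ and $\mathcal{X}_0=\alpha\partial_y$, so that $\mathcal{X}_2$ raises the total degree by one, $\mathcal{X}_1$ preserves it, and $\mathcal{X}_0$ lowers it.

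For (b), let $f=\sum_{j=0}^n f_j$ be a Darboux polynomial with $f_n\ne0$ and cofactor $K$, where $f_j$ is the homogeneous component of degree $j$. Comparing the top homogeneous parts in $\mathcal{X}f=Kf$ gives $\mathcal{X}_2 f_n=(k_1x+k_2y+k_3z)f_n$. Expanding $f_n$ in powers of $z$ and reading off the highest power forces $k_3=0$, because $\mathcal{X}_2$ carries no $z$-derivative. The remaining equation $\mathcal{X}_2 f_n=(k_1x+k_2y)f_n$ is then solved by complexifying: over $\C$ the irreducible invariant curves of $\mathcal{X}_2$ are $x$ (cofactor $y$) and $y\pm i\sqrt{\alpha}\,x$ (cofactors $\pm i\sqrt{\alpha}\,x$), while $\alpha x^2+y^2$ is a polynomial first integral of $\mathcal{X}_2$. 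Hence each $z$-homogeneous block of $f_n$ is, up to a constant, a product $x^{a}(y+i\sqrt{\alpha}\,x)^{b}(y-i\sqrt{\alpha}\,x)^{c}$, which fixes the admissible pairs $(k_1,k_2)$.

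The heart of the argument, and the step I expect to be the main obstacle, is the downward recursion on $j$. Each lower-order equation has the shape $\mathcal{X}_2 f_{j-1}-(k_1x+k_2y)f_{j-1}=k_0 f_j-\mathcal{X}_1 f_j-\mathcal{X}_0 f_{j+1}$, a linear problem for $f_{j-1}$ whose solvability is governed by the resonances of $\mathcal{X}_1$, i.e.\ by $\kappa$ and $\alpha$. This is exactly where the hypothesis $\alpha\ne-\kappa(\kappa-1)$ enters: the only candidate leading parts other than a power of $x$ are built from $y\pm i\sqrt{\alpha}\,x$, and attempting to complete such a factor to a genuine Darboux polynomial forces the affine form $y+\kappa-1\pm\sqrt{-\alpha}\,x$, whose defect under $\mathcal{X}$ is the nonzero constant $\alpha+\kappa(\kappa-1)$; this defect vanishes precisely in the excluded case (and there produces the surfaces behind $\cF_4$). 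Therefore, when $\alpha\ne-\kappa(\kappa-1)$, the recursion collapses every candidate except $f_n=x^a$, and tracking the lower terms shows that the Darboux polynomials with nonzero cofactor are exactly the powers of $x$, so the only irreducible one is $x$ itself (cofactor $y-1$), proving (b); running the same recursion with $K=0$ leaves no nonconstant solution, giving (a).

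For (c) I would invoke the structure theorem for exponential factors: if $E=\exp(g/h)$ is an exponential factor then $h$ is a product of Darboux polynomials, so by (b) one may take $h=x^{s}$, and the defining identity reduces to $\mathcal{X}g-s(y-1)g=Lx^{s}$ with $L$ affine. Solving this for each $s$, together with the case $h=1$ (where $L=\mathcal{X}g$ must itself be affine), leaves only $g$ proportional to $z$, yielding the single exponential factor $\exp(z)$ with cofactor $x-\lambda z$. Finally (d) is immediate: a Darboux first integral would force a nontrivial relation $\mu(y-1)+\nu(x-\lambda z)=0$ among the cofactors of $x$ and $\exp(z)$, but $y-1$ and $x-\lambda z$ are linearly independent, so no such relation exists and the system is not Darboux integrable.
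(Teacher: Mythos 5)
Your overall architecture matches the paper's: decompose the field into homogeneous parts, classify the top homogeneous component of a candidate Darboux polynomial via the invariant curves of $\mathcal{X}_2=xy\,\partial_x-\alpha x^2\,\partial_y$, run a downward recursion, reduce exponential factors to the case $h=x^{s}$ via the structure theorem, and kill Darboux integrability through linear independence of the cofactors $y-1$ and $x-\lambda z$. Parts (a), (c) and (d) are, modulo detail, the paper's arguments; note that for (c) with $s\ge 1$ you still need to restrict $g$ to the invariant plane $x=0$ and observe that the resulting equation forces a non-polynomial factor $e^{-ny/\kappa}$, which is exactly where $\kappa\ne0$ is used.

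The genuine gap is in the heart of (b). You claim that a leading part built from $y\pm i\sqrt{\alpha}\,x$ (equivalently a factor $\Gamma^{m}=(\alpha x^2+y^2)^{m}$) can only be completed to a Darboux polynomial if the affine polynomials $y+\kappa-1\pm\sqrt{-\alpha}\,x$ are themselves Darboux, which happens exactly when $\alpha=-\kappa(\kappa-1)$. The observation about those affine surfaces is correct (they are the surfaces behind $\mathcal{F}_4$), but the inference is not: an irreducible Darboux polynomial of degree $n\ge2$ with leading term $c_n\Gamma^{n/2}$ need not factor into affine Darboux polynomials, so the nonexistence of the affine factors does not by itself collapse the recursion. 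The paper has to do real work at precisely this point: after using the invariant plane $x=0$ to force $\beta_2=0$, it solves the degree-$n$ equation to get $p=0$, $\beta_0=-n\kappa$ and an explicit $h_{n-1}$, and then solves the degree-$(n-1)$ equation for $h_{n-2}$; polynomiality of $h_{n-2}$ forces the coefficient of a $\log(x)$ term to vanish, and that coefficient is proportional to $\alpha+\kappa(\kappa-1)$, contradicting the hypothesis. In other words, the excluded parameter value enters as a logarithmic obstruction two levels down the recursion, not as a factorization condition at the top; your proof asserts the conclusion of that computation without performing it, and the same unproved collapse is then also what your argument for (a) relies on.
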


Finally, we consider the case in which $\alpha, \beta \ne 0$ and $\kappa,
\lambda \in \R$.

\begin{theorem}\label{thm.main.5}
The following statements hold for the HSA dynamo equation with $\alpha,\beta
\ne 0$, and
 $\kappa,\lambda \in \R$:
\begin{enumerate}
\item[(a)] It does not admit any polynomial first integral.
\item[(b)] It does not admit any Darboux polynomial.
\item[(c)] Its only exponential factors are:
\begin{enumerate}
\item[(c.1)]
 $\exp(z)$
and $\exp(-x^2/2 +y/\alpha -y^2/(2 \alpha)-\beta z^2/2)$
with the cofactors $x-\lambda z$ and $y-1$, respectively if $\kappa=\lambda=0$,
\item[(c.2)]
 $\exp(z)$
with the cofactors $x-\lambda z$ in any other case.
\end{enumerate}
\item[(d)] It is not Darboux integrable.
\end{enumerate}
\end{theorem}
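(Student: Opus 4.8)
The plan is to apply the Darboux theory of integrability of Section~\ref{Sec:Prelim} to the polynomial vector field associated with system~\eqref{Nose},
\[
\mathcal{X}=\big(x(y-1)-\beta z\big)\,\partial_x+\big(\alpha(1-x^2)-\kappa y\big)\,\partial_y+(x-\lambda z)\,\partial_z,
\]
which has degree $2$, so that every cofactor is a polynomial of degree at most $1$. The common backbone of all four parts is the homogeneous decomposition $\mathcal{X}=\mathcal{X}_2+\mathcal{X}_1+\mathcal{X}_0$ into its degree-$2$, degree-$1$ and degree-$0$ parts, where $\mathcal{X}_2=xy\,\partial_x-\alpha x^2\,\partial_y$. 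The key remark is that $\mathcal{X}_2$ admits the polynomial first integral $h=\alpha x^2+y^2$ together with the invariant factors $x$ (cofactor $y$) and $z$ (cofactor $0$), so that over $\C$ its irreducible homogeneous invariant factors are exactly $x$, $y\pm i\sqrt{\alpha}\,x$ and $z$, with explicitly computable cofactors.

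For part (b) I would take a Darboux polynomial $f=\sum_{i=0}^n f_i$ with cofactor $K=K_0+K_1$, written in homogeneous components. Comparing the top-degree terms in $\mathcal{X}f=Kf$ yields $\mathcal{X}_2 f_n=K_1 f_n$, which forces $f_n$ to be a monomial in $x,\,y\pm i\sqrt{\alpha}\,x,\,z$ and pins $K_1$ down to a specific linear form. Descending degree by degree then produces a recursion $(\mathcal{X}_2-K_1)f_{k}=R_k$ whose right-hand side $R_k$ is assembled from the already-determined components $f_{k+1},f_{k+2}$ and, crucially, from the coupling term $-\beta z\,\partial_x$ inside $\mathcal{X}_1$. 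Tracking the $z$-degree through this recursion and using $\beta\neq0$, I would show that it cannot be solved by nonconstant polynomials, so $\mathcal{X}$ has no nonconstant Darboux polynomial. Part (a) is then the special case $K\equiv 0$, since a polynomial first integral is precisely a Darboux polynomial with zero cofactor.

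For part (c), by (b) any exponential factor reduces to $\exp(g)$ with $g$ a polynomial (its denominator would be a product of Darboux polynomials, hence constant), subject to $\mathcal{X}(g)=L$ with $\deg L\le 1$. Expanding $g$ in homogeneous parts and solving $\mathcal{X}(g)=L$ from the top degree downward — again exploiting $h$ and the invariant $z$ of $\mathcal{X}_2$ — determines $g$ completely and produces exactly the factors listed in (c.1)–(c.2): $\exp(z)$ with cofactor $x-\lambda z$ in general, plus the additional factor with cofactor $y-1$ precisely when $\kappa=\lambda=0$. For part (d), by (b) a Darboux first integral can only be a product of powers of exponential factors, that is $\exp\big(\sum_j \mu_j g_j\big)$; this is a first integral exactly when $\sum_j \mu_j L_j\equiv 0$, equivalently when $\sum_j \mu_j g_j$ is a polynomial first integral, which by (a) forces every $\mu_j=0$. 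Hence no Darboux first integral exists.

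The main obstacle is the completeness assertion behind (b): proving that \emph{no} nonconstant Darboux polynomial exists, rather than merely that the obvious candidates fail. The descent recursion is the delicate step, since one must control how the $z$-dependence injected by the $-\beta z\,\partial_x$ term propagates across several homogeneous layers simultaneously and rule out all cancellations; this is exactly where the hypothesis $\beta\neq0$ does the work and distinguishes this case from the $\beta=0$ situation of Theorem~\ref{thm.main.4}. Once (a), (b) and (c) are secured, the cofactor bookkeeping required for (d) is routine.
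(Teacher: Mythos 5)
Your overall strategy coincides with the paper's: decompose the vector field and the candidate Darboux polynomial into homogeneous parts, use the top--degree equation $\X_2 f_n=K_1f_n$ with $\X_2=xy\,\p/\p x-\alpha x^2\,\p/\p y$ to pin down $f_n$ and the linear part of the cofactor (this is Proposition~\ref{prop.0}, which gives $f_n=c_nz^px^{n-2m-p}(\alpha x^2+y^2)^m$ and $K_1=(n-p-2m)y$), and then descend degree by degree. Your reductions for (a) (the zero--cofactor case of (b)), for (c) (no nonconstant denominator by (b), then solve $\X g=L$ from the top degree down, which is Proposition~\ref{prop.2}), and for (d) are sound; in fact your observation that $\sum_j\mu_jL_j\equiv0$ makes $\sum_j\mu_jg_j$ a polynomial first integral, which (a) excludes unless it is constant and hence, by linear independence of $z$ and $-x^2/2+y/\alpha-y^2/(2\alpha)-\beta z^2/2$ modulo constants, forces all $\mu_j=0$, is a tidier finish than the paper's case-by-case cofactor computation.

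The gap is exactly where you flag it: statement (b), the heart of the theorem, is announced rather than proved. ``Tracking the $z$-degree through the recursion and using $\beta\ne0$'' describes a plan, not an argument; the entire content of (b) is to carry out that descent and exclude every cancellation. For the record, only one step of descent is needed here (unlike the $\beta=0$ case of Theorem~\ref{thm.main.4}, which needs two plus a restriction to the invariant plane $x=0$): solving the degree-$n$ part of $\X f=Kf$ for $f_{n-1}$ along $\X_2$ produces $\arctan(\sqrt{\alpha}\,x/y)$ and logarithmic terms whose coefficients must vanish for $f_{n-1}$ to be a polynomial, which forces $p=0$, $\alpha\beta m=0$ (hence $m=0$, since $\alpha\beta\ne0$) and $\beta_0=-n$; the surviving particular solution then contains $c_n\beta n\,yz\,x^{n-1}/(\alpha x^2+y^2)$, which is not a polynomial unless $n=0$, the desired contradiction. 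You must supply this computation or an equivalent one. A second, smaller issue: over $\C$ the top-degree equation also admits monomials $x^az^b(y+i\sqrt{\alpha}x)^c(y-i\sqrt{\alpha}x)^d$ with $c\ne d$, whose cofactor has the nonzero $x$-coefficient $i\sqrt{\alpha}(c-d)$; your claim that $K_1$ is ``pinned down'' must either exclude these at top degree (as Proposition~\ref{prop.0} asserts via the explicit general solution) or carry them through the descent, so this case cannot simply be skipped.
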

Clearly, if system \eqref{Nose} does not admit a Darboux first
integral, then it does not admit a rational first integral. Thus Theorem \ref{thm.main.4} and \ref{thm.main.5}
imply the following simple corollary.

\begin{corollary}
If $\alpha, \beta, \kappa \ne 0$, then the HSA dynamo does
not admit any polynomial, rational or Darboux first integral.
\end{corollary}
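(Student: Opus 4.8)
The plan is to derive the corollary directly from Theorem~\ref{thm.main.5}, since the hypotheses $\alpha,\beta,\kappa\ne0$ in particular force $\beta\ne0$, which places us exactly in the regime $\alpha,\beta\ne0$, $\kappa,\lambda\in\R$ treated there (the complementary case $\beta=0$ being the content of Theorem~\ref{thm.main.4}). First I would invoke Theorem~\ref{thm.main.5}(a) to exclude a polynomial first integral, and Theorem~\ref{thm.main.5}(d) to conclude that the system is not Darboux integrable, so that in particular no Darboux first integral exists.

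The only point deserving an explicit argument is the passage from Darboux to rational first integrals. Here I would recall the standard inclusion: if $H=P/Q$ were a rational first integral with $P,Q$ coprime polynomials, then $0=\dot H=(\dot P\,Q-P\,\dot Q)/Q^{2}$ yields $\dot P\,Q=P\,\dot Q$, and coprimality forces $P\mid\dot P$ and $Q\mid\dot Q$; hence $\dot P=KP$ and $\dot Q=LQ$ for polynomials $K,L$ of degree at most one less than the field, so $P$ and $Q$ are Darboux polynomials, and substituting back gives $K=L$, i.e.\ a common cofactor. Consequently $H=P\,Q^{-1}$ is a Darboux function that is a first integral, that is, a Darboux first integral. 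Thus the non-existence established in Theorem~\ref{thm.main.5}(d) immediately rules out any rational first integral; and since every polynomial first integral is \emph{a fortiori} rational, this also recovers part~(a).

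I expect no genuine obstacle at the level of the corollary itself: all the analytic content---the classification of Darboux polynomials and exponential factors and the resulting obstruction to Darboux integrability---is carried entirely by Theorem~\ref{thm.main.5}. The corollary is a clean repackaging of that result, obtained once one notes the elementary inclusions $\text{polynomial}\subset\text{rational}\subset\text{Darboux}$ among the relevant classes of first integrals, so that the single strongest statement, non-existence of a Darboux first integral, subsumes the other two.
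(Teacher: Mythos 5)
Your proposal is correct and follows essentially the same route as the paper, which deduces the corollary from Theorem~\ref{thm.main.5} (via the inclusions polynomial $\subset$ rational $\subset$ Darboux first integrals); you merely make explicit the standard argument that a rational first integral $P/Q$ yields Darboux polynomials $P,Q$ with a common cofactor, a step the paper dismisses with ``clearly.'' Your observation that the hypothesis $\beta\ne0$ alone places the system in the scope of Theorem~\ref{thm.main.5}, so that Theorem~\ref{thm.main.4} is not actually needed, is also accurate.
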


In Section \ref{Sec:Prelim} we present some preliminary results on
Darboux theory of integrability that will be used all through the
paper. In Section \ref{sec.5} we state some results on the HSA
dynamo equation for $\alpha \ne 0$. Finally, Theorems
\ref{thm.main.4} and \ref{thm.main.5} are proved in Sections
\ref{sec.6} and \ref{sec.7}, respectively.

\section{Preliminary results}\label{Sec:Prelim}
The associated vector field to \eqref{Nose} is
\begin{equation}\label{eq:vf}
\X=\big[x(y-1)-\beta z\big]\frac{\p}{\p x}+\big[\alpha(1-x^2)
-\kappa y \big] \frac{\p}{\p y}+\big[x-\lambda z\big]\frac{\p }{\p
z}.
\end{equation}
Let $U \subset \R^3$ be an open subset. We say that the
nonconstant function $H \colon U \to \R$ is a {\it first integral}
of the polynomial vector field \eqref{eq:vf} associated to system
\eqref{Nose}, if $H(x(t),y(t),z(t))=\text{constant}$ for all values
of $t$ for which the solution $(x(t),y(t),z(t))$ of $\X$ is defined
on $U$. Clearly $H$ is a first integral of $\X$ on $U$ if and only
if $\X H=0$ on $U$. When $H$ is a polynomial we say that it is a
\emph{polynomial first integral}. We will say that system
\eqref{Nose} is {\it integrable} if it admits a first integral. We
will also say that this system is {\it completely integrable} if it
admits two functionally independent first integrals. Since each
level curve of a first integral is invariant under the flow induced by
the system, it is clear that if this system is completely integrable,
then the intersection of its two first integrals determines an
invariant curve, which in turn gives information on the orbits of
the system.

In what follows we recall the basic notion from the Darboux theory
of integrability \cite{Lli04-Integ}. Let $h=h(x,y,z)\in\C[x,y,z]$ be a nonconstant
polynomial. We say that $h=0$ is an \emph{invariant algebraic
surface} of the vector field $\X$  if it satisfies $\X h=Kh$ for
some polynomial $K=K(x,y,z)\in\C[x,y,z]$, called the \emph{cofactor}
of $h=0$ \cite{LliMahVal:Lu}. Note that $K$ has degree at most $1$. The polynomial $h$
is called a \emph{Darboux polynomial}, and we also say that $K$ is
the \emph{cofactor} of the Darboux polynomial $h$. We note that a
Darboux polynomial with zero cofactor is a polynomial first
integral. Let $g,h \in \C[x,y,z]$ be coprime. We say that a
nonconstant function $e^{g/h}$ is an \emph{exponential factor} of
the vector field $\X$ given in \eqref{eq:vf} if it satisfies $\X
e^{g/h}=Le^{g/h}$ for some polynomial $L=L(x,y,z) \in \C[x,y,z]$,
called the \emph{cofactor} of $e^{h/g}$ and having degree at most
$1$. Note that this relation is equivalent to
\begin{equation}\label{ExpFac}
\big[x(y-1)-\beta z\big]\frac{\partial (g/h)}{\partial x} +
\big[\alpha(1-x^2)-\kappa y\big]\frac{\partial (g/h)}{\partial y}
+\big[x-\lambda z\big]\frac{\partial (g/h)}{\partial z} =L.
\end{equation}
For a geometric and algebraic meaning of the exponential factors see
\cite{ChrLliPer07}. A first integral $G$ of system \eqref{Nose} is
called of \emph{Darboux type} or {\it Darboux first integral} if it
is of the form
\begin{equation}\label{eq:G}
G=f_1^{\lambda_1} \cdots f_p^{\lambda_p}
\Big[\exp\Big(\frac{g_1}{h_1}\Big)\Big]^{\mu_1} \cdots
\Big[\exp\Big(\frac{g_q}{h_q}\Big)\Big]^{\mu_q},
\end{equation}
where  $f_j$ is a Darboux polynomial,
$\big[\exp\big(g_k/h_k\big)\big]^{\mu_k}$ is an exponential factor, $g_k,h_k\in\C[x,y,z]$
and $\lambda_j, \mu_k \in \C$ for $j=1,\ldots,p$, $k=1,\ldots,q$.

The Darboux theory of integrability gives a sufficient
condition for the integrability using the information about the
invariant algebraic hypersurfaces (in our case surfaces).
In practise one does not need to exploit the information about all
the invariant algebraic surfaces to construct a first integral \cite{LliMahVal:Lu}. It
is enough to find any number, say $p$, of Darboux polynomials $f_i$
with the cofactors $K_i$, and $q$ exponential factors
$\big[\exp\big(g_k/h_k\big)\big]^{\mu_k}$ with the corresponding
cofactors $L_k$ such that the linear combination
\[
\sum_{j=1}^{p} \lambda_j K_j+\sum_{k=1}^{q} \mu_j L_k=0,
\]
where $\lambda_j,\mu_k\in\C$. Then the function $G$ as in
\eqref{eq:G}  is a first integral of $\X$. For more information on
the Darboux theory of integrability see for instance
\cite{Lli04-Integ, LliZha09:DarMultInf, LliZha09:Dar:Mult} and the references therein. Note that a
polynomial or a rational first integral is a particular case of a
Darboux first integral.

For a proof of the next proposition see \cite{ChrLliPer07}.
\begin{proposition}\label{Prop:ExpFac}
The following statements hold.
\begin{itemize}
\item [(a)] If $E=e^{g/h}$ is an exponential factor for the
polynomial system \eqref{Nose} and $h$ is not a constant polynomial,
then $h=0$ is an invariant algebraic curve.
\item [(b)] Eventually $e^{g}$ can be an exponential factor, coming from the multiplicity
of the infinity.
\end{itemize}
\end{proposition}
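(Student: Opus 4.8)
The plan for part (a) is to unwind the definition of exponential factor directly. Since $E=e^{g/h}$ never vanishes, the relation $\X E = L E$ is equivalent to $\X(g/h)=L$, which after the quotient rule reads
\[
\frac{h\,\X g - g\,\X h}{h^2}=L,\qquad\text{i.e.}\qquad h\,\X g - g\,\X h = L\,h^2 .
\]
Rearranging gives $g\,\X h = h\,(\X g - L h)$, so $h \mid g\,\X h$ in the UFD $\C[x,y,z]$. Because $g$ and $h$ are coprime by hypothesis, this forces $h \mid \X h$, that is, $\X h = K h$ for some polynomial $K$, which is precisely the statement that $h=0$ is an invariant algebraic surface. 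Finally I would bound the degree of $K$ by a degree count: the components of $\X$ in \eqref{eq:vf} have degree at most $2$, so $\X h$ has degree at most $\deg h + 1$, whence $K=\X h/h$ has degree at most $1$, consistent with the definition of cofactor.

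For part (b) the argument is of a different, more global nature and I would defer to the projective theory. The idea is to compactify \eqref{Nose} in $\mathbb{CP}^3$ and examine the plane at infinity $Z=0$, which is always invariant. When this plane is invariant with multiplicity larger than one, an exponential factor whose affine denominator $h$ has collapsed to a constant --- i.e. of the pure form $e^{g}$ --- can be produced. The natural route is a perturbative one: perturb the system so that the multiple invariant plane at infinity breaks up into several distinct invariant surfaces, build the candidate exponential factor as a ratio associated with these nearby surfaces, and then pass to the limit, showing that the affine part of the denominator degenerates to a nonzero constant while a genuine exponential factor $e^{g}$ survives.

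The hard part will be part (b): making the limiting/multiplicity argument precise requires the full apparatus of the multiplicity of invariant algebraic surfaces and their behaviour under perturbation, which is exactly the content of \cite{ChrLliPer07}. Since both statements are established there, I would present the short self-contained computation for (a) above and cite \cite{ChrLliPer07} for the technical core of (b).
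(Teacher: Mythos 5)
Your proposal is correct, but note that the paper itself gives no proof at all: it simply points to \cite{ChrLliPer07} with the sentence ``For a proof of the next proposition see \cite{ChrLliPer07}.'' So for part (a) you have actually supplied something the paper omits, and your argument is sound: from $\X(g/h)=L$ one gets $h\,\X g-g\,\X h=Lh^2$, hence $g\,\X h=h(\X g-Lh)$, and since $\C[x,y,z]$ is a UFD and $\gcd(g,h)=1$, every irreducible factor of $h$ (with its full multiplicity) must divide $\X h$, giving $\X h=Kh$; the degree count $\deg K\le 1$ follows because the components of $\X$ have degree at most $2$. This is precisely the standard Christopher--Llibre--Pereira argument, so you and the cited source agree; the only thing your write-up buys over the paper's is self-containedness for the easy half. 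For part (b), the statement is really an existence remark (``$e^{g}$ \emph{can} occur, owing to the multiplicity of the plane at infinity'') rather than a claim with a short proof, and your decision to sketch the perturbation/multiplicity mechanism and then defer the technical core to \cite{ChrLliPer07} is the right call --- a genuinely self-contained treatment would require the whole machinery of multiplicities of invariant algebraic hypersurfaces, which is out of proportion for this paper. No gaps.
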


\section{Results of the HSA dynamo equation when $\alpha \ne 0$}\label{sec.5}

\begin{proposition}\label{prop.1}
System \eqref{Nose} when either $\alpha \beta \ne 0$ or
$\alpha \kappa \ne 0$ does not admit a polynomial first integral.
\end{proposition}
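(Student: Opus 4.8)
The plan is to prove non-existence of a polynomial first integral by examining the action of the vector field on the highest-degree homogeneous part of a hypothetical first integral. Suppose $H$ is a polynomial first integral, so that $\X H = 0$. Write $H = \sum_{i=0}^{n} H_i$ where each $H_i$ is homogeneous of degree $i$ and $H_n \ne 0$. The key observation is that among the three components of $\X$, the terms of highest degree come from $xy\,\partial_x$ (degree-two) and $-\alpha x^2\,\partial_y$ (degree-two), whereas every other term has lower degree. I would therefore isolate the degree-two part of $\X$, call it $\X_2 = xy\,\partial_x - \alpha x^2\,\partial_y$, and extract the top-degree homogeneous component of the equation $\X H = 0$.

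Collecting the terms of degree $n+1$ in $\X H = 0$ forces $\X_2 H_n = 0$, i.e.
\[
xy\,\frac{\partial H_n}{\partial x} - \alpha x^2\,\frac{\partial H_n}{\partial y} = 0.
\]
First I would observe that $z$ does not appear in $\X_2$, so I may treat $H_n$ as a polynomial in $x,y$ with coefficients that are polynomials in $z$, and solve the resulting linear first-order PDE by the method of characteristics. Dividing out the common factor $x$ (on the locus $x\ne 0$) reduces the equation to $y\,\partial_x H_n - \alpha x\,\partial_y H_n = 0$, whose characteristics are the level sets of $\alpha x^2 + y^2$. Hence $H_n$ must be a polynomial function of $y^2 + \alpha x^2$ together with $z$; since $\alpha\ne0$ this pins down the structure of the leading part, and the hypotheses $\alpha\beta\ne0$ or $\alpha\kappa\ne0$ will be used to propagate a contradiction to the lower-order homogeneous components.

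After fixing the admissible form of $H_n$, I would substitute back and examine the next homogeneous component of $\X H = 0$, namely the degree-$n$ equation, which couples $\X_2 H_{n-1}$ with the action of the degree-one part of $\X$ on $H_n$. The remaining lower-degree terms of the vector field, $-x\,\partial_x$, $\alpha\,\partial_y$, $-\kappa y\,\partial_y$, $-\beta z\,\partial_x$ and $x-\lambda z$ in the $z$-component, then generate obstructions: the presence of $\alpha\,\partial_y$ acting on a function of $y^2+\alpha x^2$ cannot be cancelled by any term available at that degree, and similarly the $\beta$ or $\kappa$ terms block consistency. The main obstacle I anticipate is the bookkeeping of this descent: one must carry the precise form of $H_n$ through the recursion and verify at each stage that no nonzero choice of coefficients satisfies the homogeneous equations, distinguishing the subcase $\alpha\beta\ne0$ from $\alpha\kappa\ne0$. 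I expect the cleanest route is to show already at the top two or three degrees that $H_n$ must be constant, contradicting $H_n\ne0$, rather than running the recursion all the way down.
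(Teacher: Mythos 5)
You have reproduced the paper's strategy faithfully through the first step: the homogeneous decomposition, the degree-$(n+1)$ equation $xy\,\partial_x H_n-\alpha x^2\,\partial_y H_n=0$, and its solution by characteristics, giving $H_n$ as a homogeneous polynomial in $z$ and $\Gamma=\alpha x^2+y^2$, i.e.\ built from monomials $z^{n-2m}\Gamma^{m}$. But the proposal stops exactly where the work begins. The hypotheses $\alpha\beta\ne0$ or $\alpha\kappa\ne0$ are never used in anything you actually establish; they have to come out of the degree-$n$ component of $\X H=0$,
\[
-(x+\beta z)\frac{\partial H_n}{\partial x}-\kappa y\frac{\partial H_n}{\partial y}+(x-\lambda z)\frac{\partial H_n}{\partial z}+xy\frac{\partial H_{n-1}}{\partial x}-\alpha x^2\frac{\partial H_{n-1}}{\partial y}=0,
\]
which must be solved for $H_{n-1}$ along the same characteristics. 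The general solution contains $\arctan(\alpha^{1/2}x/y)$ and logarithmic terms, and demanding that these disappear (so that $H_{n-1}$ can be a polynomial) is what yields the conditions $n=2m$, $\alpha\beta m=0$ and $2\alpha\kappa m=0$, whence $m=0$, $n=0$, and the contradiction. The top-degree equation alone cannot do this: it is satisfied by plenty of nonconstant homogeneous polynomials (e.g.\ $\Gamma^{n/2}$ for $n$ even), so your hope of ``showing already at the top degree that $H_n$ must be constant'' is unrealizable, and as written the argument is a statement of intent rather than a proof.

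Moreover, the one concrete obstruction you do name is misidentified. The term $\alpha\,\partial_y$ is the degree-\emph{zero} part of the second component of $\X$, so $\alpha\,\partial_y H_n$ is homogeneous of degree $n-1$ and enters the degree-$(n-1)$ equation, not the degree-$n$ equation you are discussing. More importantly, it cannot be the source of the contradiction: it is present for every $\alpha\ne0$, whereas the proposition genuinely requires $\beta\ne0$ or $\kappa\ne0$ --- when $\beta=\kappa=0$ the constraints above reduce to $n=2m$ and produce no contradiction at this stage, which is precisely why the hypothesis $\alpha\beta\ne0$ or $\alpha\kappa\ne0$ appears in the statement. The real obstructions are the $-\beta z\,\partial_x$ and $-\kappa y\,\partial_y$ contributions acting on $\Gamma^m$ in the degree-$n$ equation; making that precise is exactly the ``bookkeeping'' you set aside, and it is the entire content of the proof.
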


\begin{proof}
Let $h$ be a polynomial first integral of system \eqref{Nose}. Then
it satisfies
\begin{equation}\label{Prop:polynomial}
\big[x(y-1)-\beta z \big]\frac{\p h}{\p x}+ \big[\alpha (1-x^2)
-\kappa y \big]\frac{\p h}{\p y}+\big[x-\lambda z \big]\frac{\p
h}{\p z}=0.
\end{equation}
Without loss of generality we can write
\begin{equation}\label{h}
h=\sum_{j=1}^n h_j(x,y,z),
\end{equation}
where each $h_j=h_j(x,y,z)$ is a homogeneous polynomial of degree
$j$ and we assume that $h_n\neq 0$ and $n \ge 1$.

Computing the terms of degree $n+1$ in \eqref{Prop:polynomial} we
get
\begin{equation}\label{eq:tampa}
xy\frac{\p h_n}{\p x}-\alpha x^2\frac{\p h_n}{\p y}=0.
\end{equation}
Thus, solving this differential equation we get $h_n=h_n[z,(\alpha
x^2+y^2)/2]$. Since $h_n\neq 0$ is a homogeneous polynomial of
degree $n\geq 1$, we conclude that
\[
h_n = c_n z^{n-2m} (\alpha x^2 +y^2)^m,
\]
for some nonnegative integer $m$ and where $c_n$ is a constant
different from zero. We introduce the notation
\begin{equation}\label{eq:notation}
\Gamma = \alpha x^2 + y^2.
\end{equation}
Then $h_n=c_n z^{n-2m} \Gamma^m$.
Now computing the terms of degree $n$ in \eqref{Prop:polynomial} we get
\begin{equation}\label{eq:tampa.1}
-(x+\beta z )\frac{\p h_n}{\p x}-\kappa y \frac{\p h_n}{\p y} +
(x-\lambda z ) \frac{\p h_n}{\p z} +xy\frac{\p h_{n-1}}{\p
x}-\alpha x^2\frac{\p h_{n-1}}{\p y}=0.
\end{equation}
Solving it with respect to $h_{n-1}$ we obtain:
\[
\begin{split}
h_{n-1} & = \pm \frac{1}{\alpha^{1/2} \Gamma} z^{-1-2m} \bigg\{
c_n \Gamma^m z^n \bigg[ \big(-(2m-n) \Gamma -2\alpha \beta m z^2 \big)
\arctan \Big(\frac{\alpha^{1/2} x}{y} \Big) \\
&\!+\!\alpha^{1/2}z\big(\!-\!2(\kappa\!-\!1) m y\!+\!\Gamma^{1/2}(2
\kappa m \!+\!\lambda(n\!-\!2m)\big)\log\Big(\frac{4(\Gamma y \!+\!
\Gamma^{1/2})}{2\kappa m \!+\! \lambda (n\!-\!2m) x \Gamma^{3/2}z}\Big)\bigg]\\
& + \alpha^{1/2} \Gamma z^{2m+1} c_{n-1}(z,\Gamma) \bigg\},
\end{split}
\]
where $c_{n-1}$ is a homogeneous polynomial in the variables $z,\Gamma$.
Since $h_{n-1}$ is a homogeneous polynomial of degree $n-1$ we
must have
\[
2m-n=0, \quad \alpha \beta m=0, \quad \alpha [2\kappa m+
\lambda(n-2m)] =0, \quad \alpha(\kappa-1) m =0.
\]
We have $n=2m$. Furthermore,
since $\alpha \beta \ne 0$ or $\alpha \kappa \ne 0$ we also have $m=0$.
This implies $n=0$, a contradiction.
\end{proof}

\begin{proposition}\label{prop.0}
Let $f$ be a Darboux polynomial of degree $n$ with nonzero cofactor
$K=\beta_0+\beta_1 x +\beta_2 y+ \beta_3 z$, $\beta_i \in \C$ for
$i=0,\ldots,3$ of system \eqref{Nose} with
 $\alpha \ne 0$. Then $\beta_1=\beta_3=0$. Furthermore, if $f_n$
is the homogeneous polynomial of degree $n \ge 1$ then
{\upshape(}see \eqref{eq:notation} for $\Gamma${\upshape)}
\begin{equation}\label{eq:fn}
f_n=c_n z^p x^{n-2m-p} \Gamma^m, \quad \beta_2=n-p-2m,
\end{equation}
where $m,p$ are some nonnegative integers and $c_n \in \C \setminus \{0\}$.
\end{proposition}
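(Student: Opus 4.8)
The plan is to imitate the degree-counting argument of Proposition~\ref{prop.1}, but now carrying the cofactor through. Writing $f=\sum_{j=0}^{n}f_j$ with each $f_j$ homogeneous of degree $j$ and $f_n\neq 0$, I would insert this into the Darboux identity $\X f=Kf$ with $K=\beta_0+\beta_1 x+\beta_2 y+\beta_3 z$. Since $\X$ has quadratic top part $xy\,\p_x-\alpha x^2\,\p_y$ (the $z$-equation $x-\lambda z$ is only linear, so $R\,\p_z f_n$ lands in degree $n$), the only terms of degree $n+1$ come from this quadratic part acting on $f_n$ and from $(\beta_1 x+\beta_2 y+\beta_3 z)f_n$. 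Collecting them yields the single first-order linear PDE
\begin{equation*}
xy\frac{\p f_n}{\p x}-\alpha x^2\frac{\p f_n}{\p y}=(\beta_1 x+\beta_2 y+\beta_3 z)f_n,
\end{equation*}
in which $z$ enters merely as a parameter.

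The core step is to solve this equation and read off which constraints make the solution polynomial. The associated operator $L=y\,\p_x-\alpha x\,\p_y$ satisfies $L\Gamma=0$, so the characteristics carry the two independent invariants $z$ and $\Gamma=\alpha x^2+y^2$. I would pass to coordinates adapted to the rotation generated by $L$, namely $\sqrt{\alpha}\,x=r\sin\theta$, $y=r\cos\theta$ with $r^2=\Gamma$, which turns $L$ into $\sqrt{\alpha}\,\p_\theta$ and reduces the PDE to the $\theta$-ODE $\p_\theta\ln f_n=\beta_1/\sqrt{\alpha}+\beta_2\cot\theta+\beta_3 z/(r\sin\theta)$. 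Integrating gives
\begin{equation*}
f_n=\exp\!\Big(\tfrac{\beta_1}{\sqrt{\alpha}}\,\theta\Big)\,(\sin\theta)^{\beta_2}\,\exp\!\Big(\tfrac{\beta_3 z}{r}\!\int\!\tfrac{\ud\theta}{\sin\theta}\Big)\,\Phi(\Gamma,z),
\end{equation*}
where $\Phi$ is the integration ``constant'' along characteristics. Equivalently, integrating directly in $x$ on a level set of $\Gamma$ introduces an $\arctan\!\big(\sqrt{\alpha}\,x/y\big)$ term with coefficient $\beta_1$ and a logarithmic term with coefficient $\beta_3 z$, exactly the transcendental contributions already met in the proof of Proposition~\ref{prop.1}.

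The conclusion then follows by demanding that $f_n$ be a genuine single-valued homogeneous polynomial. The factor $\exp(\beta_1\theta/\sqrt{\alpha})$ is not single-valued over a period in $\theta$ (equivalently, the $\arctan$ term is not algebraic), which forces $\beta_1=0$; likewise the logarithmic factor attached to $\beta_3 z$ cannot occur in a polynomial, forcing $\beta_3=0$. With $\beta_1=\beta_3=0$ the surviving solution is $f_n=x^{\beta_2}\Phi(\Gamma,z)$; since $\Gamma$ and $z$ are polynomials, polynomiality of $f_n$ makes $\beta_2$ a nonnegative integer and $\Phi$ a polynomial in $\Gamma,z$, so writing $\Phi$ as a sum of monomials $c\,z^{p}\Gamma^{m}$ with $2m+p=n-\beta_2$ reproduces the stated shape $c_n z^{p}x^{\,n-2m-p}\Gamma^{m}$ together with the relation $\beta_2=n-p-2m$. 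The step I expect to be the main obstacle is the rigorous extraction of $\beta_1=\beta_3=0$: one must argue carefully that the arctangent and logarithmic pieces are genuinely incompatible with $f_n$ being polynomial (via the single-valuedness/periodicity observation above, or by inspecting the behaviour along the lines $y=0$ and $x=0$), rather than merely noting that they ``look transcendental.''
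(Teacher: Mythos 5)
Your proposal is correct and follows essentially the same route as the paper: you extract the same degree-$(n+1)$ equation $xy\,\partial_x f_n-\alpha x^2\,\partial_y f_n=(\beta_1x+\beta_2y+\beta_3z)f_n$ and solve the same linear PDE, merely in explicit characteristic (polar-type) coordinates rather than quoting the closed-form solution with the $\arctan$ and power factors as the paper does. Your single-valuedness/periodicity justification for $\beta_1=\beta_3=0$ is simply a more carefully argued version of the paper's assertion that those transcendental factors are incompatible with $f_n$ being a homogeneous polynomial.
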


\begin{proof}
Let $f$ be a Darboux polynomial of degree $n$ with nonzero cofactor
$K=\beta_0+\beta_1 x +\beta_2 y+ \beta_3 z$, $\beta_i \in \C$ for
$i=0,\ldots,3$ of system \eqref{Nose} with
 $\alpha \ne 0$.
Then $f$ satisfies
\begin{equation}\label{Prop:Darboux}
[x(y-1)-\beta z]\frac{\p f}{\p x}+[\alpha(1-x^2)-\kappa y] \frac{\p
f}{\p y}+[x-\lambda z] \frac{\p f}{\p z}=(\beta_0 + \beta_1x +
\beta_2y + \beta_3z)h.
\end{equation}
It is easy to see by direct computations that $h$ has degree greater
than or equal to two since system \eqref{Nose} has no Darboux
polynomials of degree one with nonzero cofactor. Thus we decompose
$f$ as a sum of homogeneous polynomials similarly as in \eqref{h},
where $n\geq 1$ and $f_n\neq 0$.

Computing the terms of degree $n+1$ in \eqref{Prop:Darboux} we get
\[
x y\frac{\p f_n}{\p x}-\alpha x^2\frac{\p f_n}{\p y}=(\beta_1x +
\beta_2y + \beta_3z)f_n.
\]
Solving this linear differential equation we obtain
\[
f_n= \exp \bigg[\pm \frac{\beta_1}{\alpha^{1/2}} \arctan
\Big(\frac{\alpha^{1/2} x}{y} \Big) \bigg] \bigg(\frac{-2 (\Gamma +
y \Gamma^{1/2})}{\beta_3 xz \Gamma^{1/2}} \bigg)^{\pm \frac{\beta_3
z}{\Gamma}} x^{\beta_2} c_n(z,\Gamma),
\]
where $c_n$ is a function in the variables $z$ and $\Gamma$. Since
$f_n$ is a homogeneous polynomial of degree $n$ we must have
$\beta_1=\beta_3=0$. Furthermore
\[
f_n=c_n z^p x^{n-2m-p} \Gamma^m, \qquad \beta_2=n-p-2m,
\]
where $m,p$ are some nonnegative integers and $c_n \in \C \setminus \{0\}$.
\end{proof}

\begin{proposition}\label{prop.2}
Let $g \in \C[x,y,z]$ satisfy
\begin{equation}\label{eq:principal.bis}
\big[x(y-1)-\beta z\big]\frac{\partial g}{\partial x} +
\big[\alpha(1-x^2)-\kappa y \big]\frac{\partial g}{\partial y}
+\big(x-\lambda z\big)\frac{\partial g}{\partial z} =\beta_0+\beta_1
x+\beta_2 y+\beta_3 z,
\end{equation}
where $\beta_i\in\C$, for $i=0,1,2,3$ and not all zero and $\alpha
\ne 0$. Then
\begin{enumerate}
\item If
 $\kappa \ne 0$ or $\kappa=0$ and $\beta,\lambda \ne 0$
then the unique solution is $h=\beta_1 z$ with $L=\beta_1(x-\lambda z)$.
\item If $\kappa =0$ and $\lambda =0$
then we obtain two solutions: $h=\beta_1 z$
with $L=\beta_1 x$ and $h=-\frac{x^2} 2 +\frac y \alpha -
\frac{y^2}{2 \alpha}-\frac{\beta}2 z^2$ with $L=\beta_0(1-y)$.
\item If $\kappa =0$, $\lambda \ne 0$ and $\beta=0$
 then we obtain two solutions: $h=\beta_1 z$ with $L=\beta_1(x-\lambda z)$
and $h=-\frac{x^2} 2 +\frac y \alpha -
\frac{y^2}{2 \alpha}$ with $L=\beta_0(1-y)$.
\end{enumerate}
\end{proposition}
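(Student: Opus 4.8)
The plan is to read \eqref{eq:principal.bis} as the statement that $e^{g}$ is an exponential factor of $\X$ with cofactor $L=\beta_0+\beta_1x+\beta_2y+\beta_3z$, and to determine every polynomial $g$ that can occur. I would argue exactly as in Propositions \ref{prop.1} and \ref{prop.0}: write $g=\sum_{j=0}^{n}g_j$ with $g_j$ homogeneous of degree $j$ and $g_n\neq0$, and compare homogeneous components in \eqref{eq:principal.bis}. Since $L$ has degree at most one, for $n\ge 1$ the terms of degree $n+1$ give the same transport equation as \eqref{eq:tampa}, namely $xy\,\p_x g_n-\alpha x^2\,\p_y g_n=0$, whose only homogeneous polynomial solutions are polynomials in $z$ and $\Gamma=\alpha x^2+y^2$ (see \eqref{eq:notation}); thus $g_n=\sum_m c_m\,z^{\,n-2m}\Gamma^{m}$.

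Next I would descend degree by degree. Writing $\mathcal{L}=xy\,\p_x-\alpha x^2\,\p_y$, the terms of degree $j$ (for $2\le j\le n$, where $L$ contributes nothing) take the form $\mathcal{L}g_{j-1}=\Phi_j$, a linear transport equation for $g_{j-1}$ in which $\Phi_j$ depends only on the already-determined parts $g_j,g_{j+1}$. Solving it introduces $\arctan\big(\alpha^{1/2}x/y\big)$ and logarithmic terms, exactly as in the displayed formula for $h_{n-1}$ in Proposition \ref{prop.1}, and the requirement that $g_{j-1}$ be a polynomial forces the corresponding obstruction coefficients to vanish. Tracking these obstructions, and using $\alpha\neq0$ throughout, I expect them to become incompatible once $n\ge 3$, so that $n\le 2$. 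This degree bound is the main obstacle, and it is more delicate here than in Proposition \ref{prop.1} because $g_n$ is a genuine combination of the monomials $z^{n-2m}\Gamma^m$ (for instance the degree-two solution below carries both an $m=0$ and an $m=1$ term). Once $n\le2$ is established, the top equation already forces $g_2\in\mathrm{span}\{z^2,\Gamma\}$.

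It then remains to insert the ansatz $g=g_0+c_1x+c_2y+c_3z+a z^2+b\,\Gamma$ into \eqref{eq:principal.bis} and to impose that $\X g$ have degree at most one. Collecting the degree-two part and setting it to zero gives $c_1=0$, $\alpha(c_2+2b)=0$, $a=\alpha\beta b$, $\kappa b=0$ and $\lambda a=0$; eliminating $a$ turns the last two into $\kappa b=0$ and $\lambda\beta b=0$, while the remaining part yields $L=\alpha c_2+c_3x+(2\alpha b-\kappa c_2)y-\lambda c_3 z$. The three cases now fall out by a short case split. If $\kappa\neq0$, or if $\kappa=0$ with $\beta\lambda\neq0$, then $\kappa b=0$ or $\lambda\beta b=0$ forces $b=0$, hence $a=c_2=0$ and the only solution is $g=\beta_1z$ with $L=\beta_1(x-\lambda z)$. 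If $\kappa=\lambda=0$ then $b$ is free, and one obtains, besides $g=\beta_1z$ with $L=\beta_1x$, the solution $g=-x^2/2+y/\alpha-y^2/(2\alpha)-\beta z^2/2$ (choosing $b=-1/(2\alpha)$) with $L=\beta_0(1-y)$. Finally if $\kappa=0$, $\lambda\neq0$, $\beta=0$ then $a=0$ but $b$ is still free, giving $g=\beta_1z$ with $L=\beta_1(x-\lambda z)$ together with $g=-x^2/2+y/\alpha-y^2/(2\alpha)$ and $L=\beta_0(1-y)$. This reproduces the three stated cases, and the only genuinely nontrivial point is the degree bound $n\le2$.
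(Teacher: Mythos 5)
Your overall strategy coincides with the paper's: decompose $g$ into homogeneous parts, use the top\/-degree transport equation to constrain $g_n$, bound the degree by $2$, and then solve the quadratic ansatz explicitly. Your quadratic analysis is complete and correct --- in fact more explicit than the paper's, which writes the full nine\/-coefficient quadratic and simply reports the outcome, whereas your prior reduction of $g_2$ to $\mathrm{span}\{z^2,\Gamma\}$ via the degree\/-three balance is a clean shortcut. The conditions you derive ($c_1=0$, $\alpha(c_2+2b)=0$, $a=\alpha\beta b$, $\kappa b=0$, $\lambda a=0$) and the resulting three\/-way case split reproduce the statement exactly.

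The one genuine gap is the degree bound $n\le 2$, which you yourself flag as the main obstacle but then only assert (``I expect them to become incompatible''). This must be proved, and the paper does so in a single descent step rather than a long chain. For $n\ge 3$ the degree\/-$(n+1)$ balance gives $g_n$ as a combination of the monomials $z^{n-2m}\Gamma^m$; the degree\/-$n$ balance is then equation \eqref{eq:tampa.1} with $h$ replaced by $g$, and solving it for $g_{n-1}$ produces $\arctan\big(\alpha^{1/2}x/y\big)$ and logarithmic obstruction terms whose coefficients must vanish. (The several monomials can be treated separately, since the obstructions attached to distinct $z^{n-2m}\Gamma^m$ are linearly independent.) Exactly as in the proof of Proposition \ref{prop.1}, this forces
\[
n-2m=0,\qquad \alpha\beta m=0,\qquad \alpha\big[2\kappa m+\lambda(n-2m)\big]=0,\qquad \alpha(\kappa-1)m=0 ,
\]
and since $\alpha\ne 0$ the last two conditions give $\kappa m=0$ and $(\kappa-1)m=0$, hence $m=0$ and therefore $n=2m=0$, contradicting $n\ge 3$. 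Note that this argument uses only $\alpha\ne0$ (unlike Proposition \ref{prop.1}, which additionally needs $\alpha\beta\ne0$ or $\alpha\kappa\ne0$ to kill the case $n=2m$ with $m\ge1$); that is precisely why degree $2$ survives here and nontrivial quadratic solutions appear. Supplying this computation closes the gap and makes your proof complete.
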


\begin{proof}
We first prove that $g$ is a polynomial of degree two. We proceed
by contradiction. Assume that $g$ is polynomial of degree $n \ge 3$.
We write it as a sum of its homogeneous parts as in equation~\eqref{h}
with $h_j$ replaced by $g_j$. Without loss of generality we can assume
that $g_n \ne 0$.
Then since the right-hand side of equation~\eqref{eq:principal.bis} has
degree at most one, computing the terms of degree $n+1$ in
equation~\eqref{eq:principal.bis} we get
\[
xy \frac{\partial g_n}{\partial x}
-\alpha x^2\frac{\partial g_n}{\partial y} =0,
\]
which is equation \eqref{eq:tampa} replacing $h_n$ by $g_n$.
Then the arguments
used in the proof of Proposition \ref{prop.1} imply that $n$ must be
even and that $g_n$ must be of the form $g_n=\alpha_n(\alpha x^2 +y^2)^{n/2}$
with $\alpha_n \in \C \setminus \{0\}$.

Now computing the terms in \eqref{eq:principal.bis} of degree $n \ge 3$
and
taking into account that the right-hand side of \eqref{eq:principal.bis}
has degree one, we get equation
\[
-(x+\beta z )\frac{\p g_n}{\p x}-\kappa y \frac{\p g_n}{\p y} +
(x-\lambda z ) \frac{\p g_n}{\p z} +xy\frac{\p g_{n-1}}{\p
x}-\alpha x^2\frac{\p g_{n-1}}{\p y}=0.
\]
which is equation~\eqref{eq:tampa.1} with
 $h_n$ replaced
by $g_n$ and $h_{n-1}$ replaced by $g_{n-1}$. The arguments
used in the proof of Proposition \ref{prop.1} imply that $g_n=0$. Then
we have that $g_n=0$ for $n \ge 3$, and thus, $g$ is a polynomial
of degree at most two satisfying \eqref{eq:principal.bis}. Without
loss of generality we can assume that it has no constant term. Then we
write it as
\[
g_n= g_{100}  x +g_{010} y +g_{001} z + g_{200} x^2 +g_{110} x y + g_{101}
x z +g_{020} y^2 +g_{011} y z +g_{002} z^2,
\]
where $g_{ijk} \in \C$ for $0\le i,j,k \le 2$.
Then
\[
\begin{split}
& \big[x (y-1)-\beta z\big]\big(g_{100} + 2 g_{200} x +g_{110} y + g_{101}z \big)
\\
& + \big[\alpha(1-x^2) -\kappa y \big]\big(g_{010}  + g_{110} x +
2 g_{020} y +g_{011} z \big)\\
&  + \big(x-\lambda z \big)\big(
g_{001} + g_{101} x +g_{011} y
 + 2 g_{002} z \big)\\
&  = \beta_0 + \beta_1 x +\beta_2 y+ \beta_3 z.
\end{split}
\]
Solving this equation we get:

\smallskip

\noindent \emph{Case $1$:
 $\kappa \ne 0$ or $\kappa=0$ and $\beta,\lambda \ne 0$}. In this
case
the unique solution is $h=\beta_1 z$ with cofactor $L=\beta_1(x-\lambda z)$.

\smallskip

\noindent \emph{Cases $2$: $\kappa =0$ and $\lambda =0$}
 Then we obtain two solutions: $h=\beta_1 z$
with $L=\beta_1 x$ and $h=-\frac{x^2} 2 +\frac y \alpha -
\frac{y^2}{2 \alpha}-\frac{\beta}2 z^2$ with $L=\beta_0(1-y)$.

\smallskip

\noindent \emph{Case $3$: $\kappa =0$, $\lambda \ne 0$ and $\beta=0$}.
We
 obtain two solutions: $h=\beta_1 z$ with $L=\beta_1(x-\lambda z)$
and $h=-\frac{x^2} 2 +\frac y \alpha -
\frac{y^2}{2 \alpha}$ with $L=\beta_0(1-y)$.

\smallskip

This completes the proof of the proposition.
\end{proof}

\section{Proof of Theorem \ref{thm.main.4}} \label{sec.6}

The proof of Statement a) follows directly from Proposition \ref{prop.1}.

\smallskip

To prove Statement b) we first note that if $h$ is a Darboux
polynomial with nonzero cofactor then it is easy to see by direct
computations that the unique Darboux polynomial with degree  one is
$x$. Now we assume that $h$ is an irreducible Darboux polynomial of
degree $n$ of system \eqref{Nose} with nonzero cofactor  $K=\beta_0+
\beta_1 x + \beta_2 y+ \beta_3 z$ with $\beta_i \in \C$ for
$i=0,\ldots,3$ not all zero and that has degree greater than or
equal to two. Thus we decompose $h$ as a sum of homogeneous
polynomials similarly as in \eqref{h}, where $n\geq 2$ and $h_n\neq
0$. By Proposition \ref{prop.0} we get $\beta_1=\beta_3=0$ and

\begin{equation}\label{eq:hn}
h_n=c_n z^p x^{n-p-2m} \Gamma^m,\quad c_n \in \C\setminus \{0\},
\end{equation}
and $\beta_2=n-p-2m$.

Since $h$ is irreducible  and $x=0$ is invariant, if we restrict $h$
to $x=0$ and denote it by $\bar h$ we must have that $\bar h \ne 0$. Furthermore $\bar h$ satisfies
\[
(\alpha -\kappa y) \frac{\partial \bar h}{\partial y} -\lambda
z \frac{\partial \bar h}{\partial z} =(\beta_0 +
\beta_2 y) \bar h.
\]
Hence, solving this equation we obtain
\[
\bar h = e^{-\beta_2 y/\kappa} (\kappa y-\alpha)^{-\frac{\alpha
\beta_2 +\beta_0 \kappa}{\kappa^2}}
C[z(\kappa y-\alpha )^{-\lambda/\kappa}],
\]
where $C$ is a function in the variable $
z(\kappa y-\alpha )^{-\lambda/\kappa}$. Since $\bar h$ must be a homogeneous
polynomial we must have
$\beta_2=0$, i.e., $n=p+2m$. Hence
\begin{equation}\label{eq:rosa.0}
\bar h = (\kappa y -\alpha)^{-\beta_0/\kappa}
P[z(\kappa y-\alpha)^{-\lambda/\kappa}],
\end{equation}
where $P$ is a polynomial in the variable $
z(\kappa y-\alpha )^{-\lambda/\kappa}$, and the
\[
h_n=c_n z^p (\alpha x^2 +y^2)^m,\quad c_n \in \C\setminus \{0\}.
\]

Computing the terms of degree $n$ in \eqref{Prop:Darboux} we get
\begin{equation}\label{Darboux:eq:hn1.bis}
 -x \frac{\p h_n}{\p x}-\kappa y \frac{\p h_n}{\p y} +
(x-\lambda z ) \frac{\p h_n}{\p z} +xy\frac{\p h_{n-1}}{\p
x}-\alpha x^2\frac{\p h_{n-1}}{\p y}= \beta_0 h_n.
\end{equation}
Substituting \eqref{eq:hn} into equation \eqref{Darboux:eq:hn1.bis}
and solving it with respect to $h_{n-1}$ we obtain
\[
\begin{split}
h_{n-1} & = \pm \frac{1}{\sqrt{\alpha} \Gamma z}
\bigg\{c_n \Gamma^m z^p \bigg(p \Gamma \arctan \Big(\frac{
\sqrt{\alpha} x}{y} \Big) - z\sqrt{\alpha} \big[2 (\kappa-1) m y \\
&  -
(\beta_0 +2 \kappa m + \lambda p) \sqrt{\Gamma} \log
\Big(\frac{-4(\Gamma^{1/2} + y)}{(\beta_0 +2 \kappa m+
\lambda p) x \Gamma z} \Big) \big]\bigg)\\
&  +
\alpha^{1/2} \Gamma z c_{n-1} (z,\Gamma) \bigg\},
\end{split}
\]
where $c_{n-1}$ is a function in the variables $z, \Gamma$.
Since $h_{n-1}$ is a homogeneous polynomial of degree $n-1$
and $c_n \alpha \ne 0$ we
must have
\[
p =0, \quad \beta_0 +2 \kappa m + \lambda p=0.
\]
From $p=0$ we deduce that $m=n/2$ ($n$ must be even) and $\beta_0 = -n \kappa $.
Then
\begin{equation}\label{eq:rosa.5}
h_{n-1}  = \pm c_n n(\kappa-1)(\alpha x^2 +y^2)^{n/2-1} y +
c_{n-1} z^{n-1-2 l} (\alpha x^2 +y^2)^l,
\end{equation}
for some $c_{n-1} \in \C$ and some nonnegative integer $l$. Also $h_n$ assumes now the simplified form
\[
h_n =c_n \Gamma^{n/2} =c_n (\alpha x^2 +y^2)^{n/2}, \quad c_n \in \C
\setminus \{0\}.
\]

Computing the terms of degree $n-1$ in \eqref{Prop:Darboux} we get
\[
\begin{split}
h_{n-2} & = \bigg\{2c_{n-1}(n-1-2l)z^n\Gamma^{l+2} \arctan\Big(\frac{x \alpha^{3/2}}{y}\Big)\\
&+\alpha^{1/2}4c_{n-1}lyz^{n+1} \Gamma^{1 + l} (1-\kappa)+c_n (2-n)nx^2z^{2+2l} \alpha \Gamma^{n/2}(\kappa-1)^2\\
&-\alpha^{1/2}2c_nnz^{2+2l} \Gamma^{1+n/2}(\alpha+(\kappa-1)\kappa) \log(x)\\
&-2\alpha^{1/2}c_{n-1}z^{n+1} \Gamma^{3/2+l}[(\kappa-\lambda)(n-2l)+\lambda]\log(\Theta)\bigg\}+c_{n-2}[z,\Gamma],
\end{split}
\]
where $c_{n-2}$ is a function of $z,\Gamma$ and
$\Theta=4(y+\Gamma^{1/2})/[(\kappa-\lambda)(n-2l)+\lambda].$  Since
$h_{n-2}$ is a homogeneous polynomial of degree $n-2$ we must
necessary have
\[
c_{n-1}(n-1-2l)=0.
\]
If $n-1-2l=0$, then again, since $h_{n-2}$ is polynomial we must
have $(\kappa-\lambda)(n-2l)+\lambda=0$,  which implies that
$\kappa=0$ and this is in contradiction with our assumption. On the
other hand if $c_{n-1}=0$, then calculating then
\[
h_{n-2} = -\frac{1}{2}c_n n\Gamma^{n/2-2}\Big[ (n-2) x^2 \alpha (\kappa-1)^2+2\Gamma[\alpha+\kappa(\kappa-1)] \log(x)\Big]+c_{n-2}[z,\Gamma].
\]
Thus since $h_{n-2}$ is a polynomial,  we get that
$\alpha+\kappa(\kappa-1)=0$,  which is in contradiction with our
assumption. This concludes the proof of statement b) in the theorem.

\smallskip

To prove Statement c) we note that in view of Proposition
\ref{Prop:ExpFac} if $E$ is an exponential factor of system \eqref{Nose}
with $\beta=0$, $\alpha \ne 0$ then it is of the form
\[
E= e^{g/x^n},
\]
for some nonnegative integer $n$; and $g$ and $x^n$ are coprime.
Then $g$ satisfies the equation
\[
x (y-1) \frac{\partial g}{\partial x} + \big[\alpha(1-x^2)- \kappa
y\big] \frac{\partial g}{\partial y} + (x-\lambda z) \frac{\partial
g}{\partial z} -n (y-1) g =(\beta_0+ \beta_1 x + \beta_2 y + \beta_3
z) x^n,
\]
where we have simplified by the common factor $x^n e^{g/x^n}$ and
where $\beta_i \in \C$ for $i=0,\ldots,3$. We
consider two different cases.

\smallskip

\noindent \emph{Case $1$: $n \ge 1$}. In this case if we denote
by $\bar g$ the restriction of $g$ to $x=0$ we have that
$\bar g \ne 0$ (otherwise $g$ would be divisible by $x$, which
is not possible) and $\bar g$ satisfies
\[
 (\alpha- \kappa y)
\frac{\partial \bar g}{\partial y} -\lambda z
\frac{\partial \bar g}{\partial z} =n (y-1) \bar g.
\]
Hence, $\bar g$ is a Darboux polynomial of system \eqref{Nose}
with $\beta=0$, $\kappa \ne 0$ and restricted to $x=0$. Solving
this partial differential equation we obtain
\[
\bar g = e^{-\frac{n y}{\kappa}} (\kappa y-\alpha)^{
\frac{n(\kappa-\alpha)}{\kappa^2}} C[z (\alpha-\kappa
y)^{-\lambda/\kappa}]
\]
Since $\bar g$ must be a polynomial and $n\kappa \ne 0$ we have that
$\bar g =0$, a contradiction. Hence this case is not possible.

\smallskip

\noindent \emph{Case $2$: $n=0$}. In this case $E=e^g$
where
$g$ satisfies \eqref{eq:principal.bis}. In view of
Proposition~\ref{prop.2} and since $\beta=0$ with $\kappa \ne 0$
we obtain that the unique exponential factors are
$e^z$ with cofactor $x-\lambda z$. This concludes the proof
of the statement c).

\smallskip

The proof of statement d) will be done by contradiction.
Assume that $G$ is a first integral of Darboux type.
In view
of the definition of first integral of Darboux
type in \eqref{eq:G} and taking into account statements a), b) and
c), $G$ must be of the form
\[
G= x^\lambda e^{\mu z}, \quad \text{with}
\quad \lambda, \mu \in \C.
\]
Since $G$ is a first integral it must satisfy $\X G=0$, that is,
\[
\begin{split}
\X G &= x(y-1) \frac{\partial G}{\partial  x} + \big[\alpha
(1-x^2)-\kappa y\big] \frac{\partial G}{\partial y}
+ \big(x-\lambda z \big)\frac{\partial G}{\partial z} \\
&  = \big[\lambda(y-1) +  \mu (x-\lambda z)\big] G=0.
\end{split}
\]
Hence, $\lambda(y-1) + \mu(x-\lambda z)=0$,
which implies $\lambda=\mu=0$. Then
$G=\text{constant}$, in contradiction with the fact that $G$ was
a first integral. This concludes the proof of the theorem.

\section{Proof of Theorem \ref{thm.main.5}}\label{sec.7}

The proof of Statement a) follows directly from Proposition \ref{prop.1}.

\smallskip

To prove Statement b) we first note that if $h$ is a Darboux
polynomial with nonzero cofactor then it is easy to see by direct
computations that $h$ has degree greater than or equal to two since
system \eqref{Nose} has no Darboux polynomials of degree one with
nonzero cofactor $K=\beta_0+ \beta_1 x + \beta_2 y+ \beta_3 z$ with
$\beta_i \in \C$ for $i=0,\ldots,3$ not all zero. Thus we decompose
$h$ as a sum of homogeneous polynomials similarly as in \eqref{h},
where $n\geq 2$ and $h_n\neq 0$. By Proposition \ref{prop.0} we get
$\beta_1=\beta_3=0$ and $h_n=c_n z^p x^{n-p-2m} \Gamma^m$, $c_n \in
\C\setminus \{0\}$ and $\beta_2=n-p-2m$.

Computing the terms of degree $n$ in \eqref{Prop:Darboux} we get
\begin{equation}\label{Darboux:eq:hn1}
\begin{split}
& -(x+\beta z )\frac{\p h_n}{\p x}-\kappa y \frac{\p h_n}{\p y} +
(x-\lambda z ) \frac{\p h_n}{\p z} +xy\frac{\p h_{n-1}}{\p
x}-\alpha x^2\frac{\p h_{n-1}}{\p y}\\
& = \beta_0 h_n +(n-p-2m) y h_{n-1}.
\end{split}
\end{equation}
Substituting \eqref{eq:fn} into equation \eqref{Darboux:eq:hn1}
and solving it with respect to $h_{n-1}$ we obtain
\begin{equation}\label{eq:hn1}
\begin{split}
& h_{n-1}=\pm \frac{x^{-1-2m+n-p}}{\sqrt{\alpha} \Gamma z}
\bigg\{ c_n \Gamma^m z^p \bigg( x (\alpha p x^2 + p
y^2 -2 \alpha \beta m z^2) \arctan \Big( \frac{\sqrt{\alpha} x}{y}
\Big) \\
& + \sqrt{\alpha} z \big( y (-2(\kappa-1) m x + \beta(-2m+n-p) z
\big) \\
& + \big[\beta_0 +2 (\kappa-1)m +n + (\lambda-1) p \big]
x\sqrt{\Gamma} \log \Big(\frac{-4 (\Gamma + y \Gamma^{1/2})}{(\beta_0
+2 (\kappa-1 ) m +n + (\lambda-1) p ) x \Gamma^{3/2} z } \Big) \bigg) \\
&
+ \sqrt{\alpha} x \Gamma z c_{n-1}(z,\Gamma)\bigg\}
\end{split}
\end{equation}
where $c_{n-1}(z,\Gamma)$ is a function of the variables $z,\Gamma$.
Since $h_{n-1}$ is a homogeneous polynomial of degree
$n-1$ we must have in particular that
\[
p=0, \quad \alpha \beta m=0,\quad \beta_0
+2 (\kappa-1 ) m +n + (\lambda-1) p =0.
\]

Since $\alpha \beta \ne 0$ this implies $p=m=0$ and $\beta_0=-n$. Hence
$h_{n-1}$ becomes
\[
h_{n-1} = \frac{c_n \beta n y z}{\Gamma} x^{n-1} + x^ n c_{n-1}(z,\Gamma).
\]
Again since $h_{n-1}$ must be a homogeneous polynomial of degree $n-1$ and
$\alpha \beta c_n \ne 0$ we must have $n=0$. But then $\beta_0=
\beta_1=\beta_2=\beta_3=0$ a contradiction with the fact that $h$
was a Darboux polynomial with nonzero cofactor. This concludes
the proof of statement b) in the theorem.

\smallskip

To prove statement c) we note that
from Proposition \ref{Prop:ExpFac} we can write $E =
e^g$ and $g$ satisfies \eqref{eq:principal.bis}. Then statement c)
follows now directly from Proposition \ref{prop.2}.

\smallskip

The proof of statement~d) will be done by contradiction.
Assume that $G$ is a first integral of Darboux type. We consider three
different cases:

\smallskip

\noindent \emph{Case $1$: $\kappa \ne 0$ or
$\kappa=0$ and $\lambda \beta \ne 0$}.
 Then
in view
of the definition of first integral of Darboux
type in \eqref{eq:G} and taking into account statements a), b) and
c), $G$ must be of the form
\[
G= e^{\mu z}, \quad \text{with}
\quad \mu \in \C.
\]
Since $G$ is a first integral it must satisfy $\X G=0$, that is,
\[
\begin{split}
\X G &= \big(x(y-1)-\beta z \big) \frac{\partial G}{\partial  x} +
\big(\alpha (1-x^2)-\kappa y\big) \frac{\partial G}{\partial y}
+ \big(x-\lambda z \big)\frac{\partial G}{\partial z} \\
&  =  \mu \big(x-\lambda z \big) G=0.
\end{split}
\]
Hence, $\mu(x-\lambda z)=0$,
which implies $\mu=0$. Then
$G=\text{constant}$, in contradiction with the fact that $G$ was
a first integral.

\smallskip

\noindent \emph{Case $2$: $\kappa=0$ and $\lambda =0$}. In this case
in view
of the definition of first integral of Darboux
type in \eqref{eq:G} and taking into account statements a), b) and c),
$G$ must be of the form
\[
G= e^{\mu_1 z} e^{\mu_2 (-\frac{x^2}2 +\frac y \alpha -\frac{y^2}{2 \alpha}
-\frac{\beta}2 z^2)}, \quad \text{with}
\quad \mu_1, \mu_2 \in \C.
\]
Since $G$ is a first integral it must satisfy $\X G=0$, that is,
\[
\begin{split}
\X G &= \big[x(y-1)-\beta z\big] \frac{\partial G}{\partial  x} +
\big[\alpha (1-x^2)-\kappa y\big] \frac{\partial G}{\partial y}
+ \big[x-\lambda z\big]\frac{\partial G}{\partial z} \\
&  =  \big[\mu_1 (x-\lambda z)+\mu_2 (1-y)\big] G=0.
\end{split}
\]
Hence,
$\mu_1 (x-\lambda z ) + \mu_2 (1-y) =0$,
which implies $\mu_1=\mu_2=0$. Then
$G=\text{constant}$, in contradiction with the fact that $G$ was
a first integral.

\smallskip

\noindent \emph{Case $3$: $\kappa=0$, $\lambda \ne 0$ and $\beta=0$}.
Then
in view
of the definition of first integral of Darboux
type in \eqref{eq:G} and taking into account statements a), b) and c),
$G$ must be of the form
\[
G= e^{\mu_1 z} e^{\mu_2 (-\frac{x^2}2 +\frac y \alpha -\frac{y^2}{2 \alpha})
}, \quad \text{with}
\quad \mu_1, \mu_2 \in \C.
\]
Since $G$ is a first integral it must satisfy $\X G=0$, that is,
\[
\X G  =
  \big(\mu_1 (x-\lambda z)+\mu_2 (1-y) \big) G=0.
\]
Proceeding as in the case above we reach a contradiction. This concludes the proof
of the theorem.

\section*{Acknowledgements}
The second author is partially supported by FCT through CAMGDS, Lisbon.

\end{document}